\theoremstyle{plain} \newtheorem{theorem}{Theorem}
\newtheorem{lemma}{Lemma}
\numberwithin{equation}{section} \numberwithin{lemma}{section} \numberwithin{theorem}{section}
\def\ps@pprintTitle{%
 \let\@oddhead\@empty
 \let\@evenhead\@empty
 \def\@oddfoot{\centerline{\thepage}}%
 \let\@evenfoot\@oddfoot}
\begin{document}
\begin{frontmatter}

\title{From Puiseux series to   invariant algebraic curves: the FitzHugh--Nagumo model}

\author{Maria V. Demina}

\address[1]{National Research Nuclear University "MEPhI", 31 Kashirskoe Shosse,
115409, Moscow, Russian Federation}

\address[2]{National Research University Higher School of Economics, 34 Tallinskaya Street, 123458, Moscow, Russian Federation}

\ead{maria\underline{ }dem@mail.ru; mvdemina@mephi.ru}

\begin{abstract}
A relationship between Puiseux series satisfying an ordinary differential equation corresponding to a polynomial dynamical system  and degrees of irreducible invariant algebraic curves
is studied. A bound on the degrees of irreducible invariant algebraic curves  for a wide class of  polynomial dynamical systems is obtained. It is demonstrated that  the Puiseux series near infinity can be used to find irreducible algebraic curves explicitly. As an example, all irreducible invariant algebraic curves for  the famous FitzHugh--Nagumo system  are obtained.

\end{abstract}

\begin{keyword}
FitzHugh--Nagumo model, invariant algebraic curves, Darboux polynomials, Puiseux series

\end{keyword}

\end{frontmatter}

\section{Introduction}\label{Introduction}

A polynomial dynamical system in $\mathbb{C}^2$ can be defined as
\begin{equation}
\begin{gathered}
 \label{VF}
 x_t=P(x,y),\quad y_t=Q(x,y),
\end{gathered}
\end{equation}
where $P(x,y)$ and $Q(x,y)$ are polynomials in the ring $\mathbb{C}[x,y]$. An algebraic curve $F(x,y)=0$, $F(x,y)\in \mathbb{C}[x,y]\setminus\mathbb{C}$
is called an invariant algebraic curve  (or a Darboux polynomial) of dynamical system \eqref{VF} if it satisfies the following equation
\begin{equation}
\begin{gathered}
 \label{Inx_Eq}
P(x,y)F_x+Q(x,y)F_y=\lambda(x,y) F,
\end{gathered}
\end{equation}
where $\lambda(x,y)\in \mathbb{C}[x,y]$ is  a polynomial called the cofactor of the invariant curve $F(x,y)=0$. For simplicity let us slightly abuse notation and call the polynomial $F(x,y)$ satisfying equation
 \eqref{Inx_Eq}  an invariant algebraic curve bearing in mind that in fact the zero set of $F(x,y)$ is under consideration.

\begin{lemma}\label{L1}

Let $F(x,y)\in \mathbb{C}[x,y]\setminus\mathbb{C}$ and $F=f_1^{n_1}\ldots f_m^{n_m}$ be its factorization in irreducible factors.
Then $F$ is an invariant algebraic curve of dynamical system \eqref{VF}  if and only if $f_j$ is an
invariant algebraic curve  of  dynamical system \eqref{VF} for each $j = 1$, $\ldots$, $m$ . In addition the following relation is valid $\lambda = n_1\lambda_1+$ $\ldots$ $+n_m\lambda_m$, where $\lambda$ is the cofactor of $F$ and $\lambda_j$ is the cofactor of $f_j$.

\end{lemma}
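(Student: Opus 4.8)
The plan is to reduce the statement to a divisibility question in the unique factorization domain $\mathbb{C}[x,y]$ by exploiting the Leibniz rule. First I would introduce the derivation $\mathcal{D} := P\,\partial_x + Q\,\partial_y$ on $\mathbb{C}[x,y]$, so that a polynomial $G\in\mathbb{C}[x,y]\setminus\mathbb{C}$ is an invariant algebraic curve with cofactor $\lambda_G$ precisely when $\mathcal{D}G=\lambda_G\,G$ with $\lambda_G\in\mathbb{C}[x,y]$. Applying the Leibniz rule to the factorization $F=f_1^{n_1}\ldots f_m^{n_m}$ gives the identity
\[ \mathcal{D}F \;=\; \sum_{j=1}^{m} n_j\,\frac{F}{f_j}\,\mathcal{D}f_j, \]
on which the whole argument rests; here each $F/f_j=f_1^{n_1}\ldots f_j^{n_j-1}\ldots f_m^{n_m}$ is a genuine polynomial.

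The forward implication is immediate: if $\mathcal{D}f_j=\lambda_j f_j$ with $\lambda_j\in\mathbb{C}[x,y]$ for every $j$, then substituting into the displayed identity yields $\mathcal{D}F=\bigl(\sum_{j=1}^{m}n_j\lambda_j\bigr)F$, so $F$ is an invariant algebraic curve, and since the cofactor of a nonzero polynomial is unique (because $\mathbb{C}[x,y]$ is an integral domain), its cofactor equals $n_1\lambda_1+\ldots+n_m\lambda_m$.

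For the reverse implication I would fix $k\in\{1,\ldots,m\}$ and show $f_k\mid\mathcal{D}f_k$. From $\mathcal{D}F=\lambda F$ we know $f_k^{n_k}$ divides $\mathcal{D}F$. In the displayed identity every summand with $j\neq k$ is visibly divisible by $f_k^{n_k}$, because the factor $f_k^{n_k}$ survives in $F/f_j$; hence $f_k^{n_k}$ must divide the remaining summand $n_k f_k^{n_k-1}\bigl(F/f_k^{n_k}\bigr)\mathcal{D}f_k$. Cancelling the nonzero constant $n_k$ and the factor $f_k^{n_k-1}$ in the integral domain $\mathbb{C}[x,y]$ leaves $f_k\mid\bigl(F/f_k^{n_k}\bigr)\mathcal{D}f_k$. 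Since $f_k$ is irreducible and $f_k\nmid F/f_k^{n_k}$ (the $f_j$ are pairwise non-associate irreducibles), unique factorization forces $f_k\mid\mathcal{D}f_k$, i.e. $\mathcal{D}f_k=\lambda_k f_k$ with $\lambda_k\in\mathbb{C}[x,y]$. Thus every $f_j$ is an invariant algebraic curve, and feeding this back into the forward computation gives simultaneously $\mathcal{D}F=\bigl(\sum_j n_j\lambda_j\bigr)F$ and, by uniqueness of the cofactor, the relation $\lambda=n_1\lambda_1+\ldots+n_m\lambda_m$.

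I expect the only delicate point to be this divisibility step in the converse: one must make sure the cancellation of $f_k^{n_k-1}$ is legitimate (it is, as $\mathbb{C}[x,y]$ is an integral domain) and that $f_k$ genuinely does not divide $F/f_k^{n_k}$, which is exactly where the irreducibility of the $f_j$ and their being pairwise distinct in the factorization are used. Everything else is routine bookkeeping with the Leibniz rule.
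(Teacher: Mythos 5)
Your proof is correct and is exactly the standard argument; the paper itself omits the proof of Lemma \ref{L1}, calling it straightforward and citing the literature, where the same Leibniz-rule computation combined with unique factorization in $\mathbb{C}[x,y]$ is used. The delicate points you flag (cancellation in an integral domain, and $f_k\nmid F/f_k^{n_k}$ because the irreducible factors are pairwise non-associate) are handled properly, so nothing is missing.
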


The proof of this lemma is straightforward, see, for example, \cite{Llibre01}. It can be observed that an invariant
algebraic curve of dynamical system \eqref{VF} is formed by solutions of the latter. A solution of dynamical system \eqref{VF}
has either empty intersection with the zero set of $F$, or it is entirely contained in $F= 0$.
Existence of invariant algebraic curves is a substantial measure of integrability, for more details see \cite{Singer, Llibre01, Christopher, Goriely, Zhang, Lei01}.
In view of Lemma \ref{L1} it is an important problem to classify all irreducible invariant algebraic curves of
dynamical systems.

The problem of finding a bound on degrees of irreducible invariant algebraic curves goes back to Poincar\'{e}~\cite{Poincare}.
Partial results valid under certain restrictions on the singularities of dynamical systems or (and) invariant algebraic curves were obtained
by Cerveau and Lins Neto \cite{Neto01}, Carnicer \cite{Carnicer01},  Walcher \cite{Walcher01}.
It is still an open problem to establish an "effective" upper bound (if any) for a given polynomial dynamical system. Here by "effective" upper bound we mean a bound, which allows one to  find all irreducible invariant algebraic curves iterating finite amount of times  the method of undetermined coefficients. For rapid methods of finding Darboux polynomials with bounded degrees see \cite{Cheze01}. Note that the
uniform upper bound that depends entirely on the degree of the dynamical system under consideration may not
exist. Indeed, there are polynomial dynamical systems that possess irreducible invariant algebraic curves with degrees depending on the coefficients of the system. For example, if the coefficients in a family of quadratic dynamical systems studied by Christopher and Llibre   are not bounded, then the degrees of irreducible invariant algebraic curves can be  arbitrary \cite{Llibre03}.

In this article our aim is to present an  approach, which can be used to find
an "effective" bound for a wide class of  polynomial dynamical systems. Our main tool is to use Puiseux series satisfying the following first--order ordinary differential equation
\begin{equation}
\begin{gathered}
 \label{ODE_y}
P(x,y)y_x-Q(x,y)=0.
\end{gathered}
\end{equation}
Here the variable $y$ is regarded as dependent and the variable $x$ as independent. Note that the roles can be changed.
In what follows we shall suppose that  the polynomials $P(x,y)$, $Q(x,y)$ do not have non--constant common factors.

In relationship with our approach let us also mention the work by Lei and Yang \cite{Lei02}, where the bounds for irreducible invariant algebraic curves of certain dynamical systems were obtained in terms of algebraic multiplicities of the dynamical systems in question at the singular points, and   the method for finding rational first integrals
of two--dimensional polynomial vector fields introduced by Ferragut and Giacomini \cite{Ferragut01}. This method uses Tailor and Puiseux series
near finite points. In addition see the articles also dealing with first integrals, algebraic functions, and Puiseux series near finite points
\cite{Lei03, Giacomini02, Giacomini01, Gine01}.

It seams that the connection between  invariant
algebraic curves and  Puiseux series near infinity that satisfy  first--order ordinary differential equation  \eqref{ODE_y}  is discussed   in this article for the first time.

A Puiseux series in a neighborhood of the point $x=0$ is defined as
\begin{equation}
\begin{gathered}
 \label{Puiseux_null}
y(x)=\sum_{k=0}^{+\infty}c_kx^{\frac{l_0}{n_0}+\frac{k}{n}}
\end{gathered}
\end{equation}
where $l_0\in\mathbb{Z}$, $n_0\in\mathbb{N}$.

In its turn a Puiseux series in a neighborhood of the point $x=\infty$ is given by
\begin{equation}
\begin{gathered}
 \label{Puiseux_inf}
y(x)=\sum_{k=0}^{+\infty}b_kx^{\frac{l_0}{n_0}-\frac{k}{n}}
\end{gathered}
\end{equation}
where $l_0\in\mathbb{Z}$, $n_0\in\mathbb{N}$.

Let us formulate our main results.

\begin{theorem}\label{T1}
Suppose that there exists finite number of Puiseux series of the form \eqref{Puiseux_null}
satisfying equation \eqref{ODE_y}. Let $F(x,y)$ be an irreducible invariant  algebraic curve of dynamical system \eqref{VF}.
Then  the degree of $F(x,y)$ with respect to $y$  does not exceed the number of distinct Puiseux series of the form \eqref{Puiseux_null} that satisfy equation \eqref{ODE_y}.
\end{theorem}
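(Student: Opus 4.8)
The plan is to regard $F(x,y)$ as a polynomial in $y$ with coefficients in $\mathbb{C}[x]$, say of degree $N$ in $y$, and to show each of its $N$ roots (as algebraic functions of $x$, expanded near $x=0$) is a Puiseux series of the form \eqref{Puiseux_null} satisfying \eqref{ODE_y}; since $F$ is irreducible these roots are pairwise distinct, so $N$ cannot exceed the number of such series. First I would fix a branch $y=\varphi(x)$ of $F(x,y)=0$ near a generic point; by the Newton–Puiseux theorem $\varphi$ admits a Puiseux expansion, and after checking that no pole in $x$ occurs at $x=0$ (using that $F$ is a genuine polynomial and, if needed, a translation in $x$ to a point where the leading coefficient in $y$ does not vanish) this expansion has the shape \eqref{Puiseux_null}. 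The key step is then to verify that $\varphi(x)$ actually solves the ODE \eqref{ODE_y}: differentiating $F(x,\varphi(x))\equiv 0$ gives $F_x + F_y\,\varphi_x = 0$, while the invariance identity \eqref{Inx_Eq} evaluated along the branch gives $P F_x + Q F_y = \lambda F = 0$ there; eliminating $F_x$ between these two relations yields $(P\varphi_x - Q)F_y = 0$ on the branch.

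The main obstacle is the possibility that $F_y$ vanishes identically along the branch, which would make the elimination vacuous. I would handle this by invoking irreducibility of $F$: if $F_y\equiv 0$ on the zero set of $F$, then $F\mid F_y$ in $\mathbb{C}[x,y]$, and since $\deg_y F_y<\deg_y F$ this forces $F_y\equiv 0$, i.e. $F$ depends only on $x$; but then $F$ is (up to scalar) a product of factors $x-x_0$, contradicting irreducibility unless $F$ is linear in $x$ and independent of $y$ — a case one can either exclude at the outset (it contributes a curve with $\deg_y F=0$, trivially consistent with the bound) or dispatch directly. Hence $F_y$ is not the zero polynomial and does not vanish identically on the irreducible curve, so $P\varphi_x-Q$ vanishes on a set with nonempty interior in the branch, whence $\varphi$ satisfies \eqref{ODE_y} as a formal Puiseux series.

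Finally I would assemble the counting argument: the $N$ branches of $F=0$ through a generic $x$-value give $N$ distinct Puiseux series (distinctness follows because $F$, being irreducible, is separable in $y$ over $\mathbb{C}(x)$, so its $y$-discriminant is not identically zero and the branches do not coalesce generically), each of the form \eqref{Puiseux_null} and each satisfying \eqref{ODE_y}. Therefore $N=\deg_y F$ is at most the total number of distinct Puiseux series of the form \eqref{Puiseux_null} solving \eqref{ODE_y}, which is finite by hypothesis. One subtlety worth a remark is the freedom in the base point $x=0$: a series centered at a shifted point $x=x_*$ is still of the form \eqref{Puiseux_null} after relabeling, so the finiteness hypothesis should be read as uniform in the center, or one fixes the center once and for all; I would state this explicitly to keep the bound unambiguous.
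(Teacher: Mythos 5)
Your proposal is correct and follows essentially the same route as the paper: factor $F$ over the algebraically closed field of Puiseux series at $x=0$, show each root solves \eqref{ODE_y} by eliminating $F_x$ between the chain rule identity and the invariance identity (with $F_y$ not vanishing identically along a root, by irreducibility and a degree comparison --- the paper's Lemma \ref{L2}), and obtain distinctness of the roots from irreducibility. The only superfluous element is your closing worry about the base point: the factorization and all the series live at $x=0$ (and the form \eqref{Puiseux_null} already permits a pole there via $l_0<0$), so no translation of the center or uniformity-in-center reading is needed.
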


Situation of Theorem \ref{T1} seldom occurs. The next theorem is much more important for applications.

\begin{theorem}\label{T2}
Suppose that there exists finite number of Puiseux series of the form \eqref{Puiseux_inf}
satisfying equation \eqref{ODE_y}. Let $F(x,y)$ be an irreducible invariant  algebraic curve of dynamical system \eqref{VF}.
Then  the degree of $F(x,y)$ with respect to $y$  does not exceed the number of distinct Puiseux series of the form \eqref{Puiseux_inf} that satisfy equation \eqref{ODE_y}.
\end{theorem}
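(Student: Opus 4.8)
The plan is to exploit the elementary remark that, since $F(x,y)=0$ is an invariant curve, each of its branches $y=\varphi(x)$ is automatically a solution of \eqref{ODE_y}; counting the branches near $x=\infty$ then gives the bound. Write $N=\deg_y F$. The case $N=0$ is trivial, so assume $N\ge 1$.

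First I would look at $F$ over the field of Puiseux series in $x^{-1}$, which contains $\mathbb{C}(x)$ and, by the Newton--Puiseux theorem, is algebraically closed. Being irreducible in $\mathbb{C}[x,y]$ with positive degree in $y$, $F$ is primitive as a polynomial in $y$, hence irreducible in $\mathbb{C}(x)[y]$ by Gauss's lemma, and, the characteristic being zero, separable over $\mathbb{C}(x)$. Therefore over the Puiseux field it splits as $F=a(x)\prod_{j=1}^{N}\bigl(y-\varphi_j(x)\bigr)$ with $\varphi_1,\dots,\varphi_N$ pairwise distinct; by the definition of that field each $\varphi_j$ is a Puiseux series of the form \eqref{Puiseux_inf}.

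Then I would verify that every $\varphi_j$ satisfies \eqref{ODE_y}. Plugging $y=\varphi_j$ into $F$ gives $F(x,\varphi_j(x))\equiv 0$; differentiating in $x$ term by term with the chain rule gives $F_x(x,\varphi_j)+F_y(x,\varphi_j)\,\varphi_j'\equiv 0$. Multiplying by $P(x,\varphi_j)$ and subtracting the invariance identity \eqref{Inx_Eq} evaluated along $y=\varphi_j$, i.e. $P(x,\varphi_j)F_x(x,\varphi_j)+Q(x,\varphi_j)F_y(x,\varphi_j)=\lambda(x,\varphi_j)F(x,\varphi_j)\equiv 0$, yields $F_y(x,\varphi_j)\bigl(P(x,\varphi_j)\varphi_j'-Q(x,\varphi_j)\bigr)\equiv 0$. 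Since $F$ is separable over $\mathbb{C}(x)$, it shares no root with $F_y$, so $F_y(x,\varphi_j)\not\equiv 0$; cancelling it gives $P(x,\varphi_j)\varphi_j'-Q(x,\varphi_j)\equiv 0$, so $\varphi_j$ solves \eqref{ODE_y}.

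Hence $\varphi_1,\dots,\varphi_N$ are $N$ distinct Puiseux series of the form \eqref{Puiseux_inf} solving \eqref{ODE_y}, so the number of such series is at least $N=\deg_y F$; the finiteness hypothesis makes this number finite, which is the claimed inequality. The step I expect to be delicate is exactly the one above: the naive identity $\varphi_j'=-F_x/F_y=Q/P$ has to be read as an equality of Puiseux series rather than as a pointwise relation on the curve, and it is the irreducibility (hence separability) of $F$ that prevents $F_y$ from vanishing identically along a branch and thereby justifies the division. Theorem \ref{T1} is proved in the same way, with Puiseux series of the form \eqref{Puiseux_null} near $x=0$ in place of those near $x=\infty$.
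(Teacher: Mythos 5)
Your proposal is correct and follows essentially the same route as the paper: factor $F$ over the algebraically closed field $\mathbb{C}_{\infty}\{x\}$, use irreducibility to see the roots are distinct (the paper cites Walker's multiple-factors criterion where you invoke Gauss's lemma and separability), and show each root solves \eqref{ODE_y} by combining the differentiated identity $F_x+F_y\varphi_j'=0$ with the invariance relation and cancelling $F_y(x,\varphi_j)\not\equiv0$ (the paper's Lemma \ref{L2}, which justifies the non-vanishing of $F_y$ along the branch by a B\'ezout argument rather than separability). The delicate step you flag is exactly the one the paper isolates as Lemma \ref{L2}, and your treatment of it is sound.
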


Interestingly, the same "finiteness property" of Laurents series is one of the major points in classification of meromorphic solutions of autonomous algebraic ordinary differential equations~\cite{Eremenko01, Demina02, Demina04, Demina03}.

All the Puiseux series in neighborhoods of the points $x=0$ and $x=\infty$ satisfying equation \eqref{ODE_y} can be easily constructed with the help of the Newton polygon related to equation \eqref{ODE_y} \cite{Bruno02, Bruno01}. Note that  the Newton diagram and the Puiseux series in a neighborhood of the point $x=0$ are mainly considered in classical literature. In view of this  we shall give a definition of the Newton polygon and describe  an algorithm for finding the Puiseux series for booth cases in detail.

Let us note that making the change of variables  $x\mapsto x-x_0$, $x_0\neq0$ or $y\mapsto y-y_0$, $y_0\neq0$ one can investigate the number of admissible Puiseux series in a neighborhood of the point $x=x_0$ or $y=y_0$. A theorem similar to \ref{T1} can be formulated for    Puiseux series centered at the point $x=x_0$ or $y=y_0$.
Here and in what follows an admissible Puiseux series means that this series  satisfies equation \eqref{ODE_y}.


In this article we shall consider the following  dynamical system
\begin{equation}
\begin{gathered}
 \label{FH1_DS}
x_t=-x^3+ex^2+\sigma x-y+\delta,\quad y_t=\alpha x+\beta y.
\end{gathered}
\end{equation}
The change of variables $x\mapsto x+A$, $y\mapsto y+B$, $A=e/3$, $\alpha A+\beta B=0$ relates system \eqref{FH1_DS} with its simplified version at $e=0$. Thus without loss of generality we set $e=0$.
All the parameters are supposed to be from the field $\mathbb{C}$. If $\alpha=0$ and $\beta=0$, then integrating the second equation, we obtain $y(t)=C_0$. In this case the function $x(t)$  satisfies a simple first--order ordinary differential equation. Hence in what follows we suppose that $\alpha$ and $\beta$ are not simultaneously zero.  Dynamical system  \eqref{FH1_DS}  is the two--dimensional FitzHugh--Nagumo system \cite{FitzHugh01, Nagumo01}. It is one of the most famous models describing the excitation of neural membranes and
the propagation of nerve impulses along an axon. This system has been intensively studied in recent years, see \cite{Kostova, Langfield01, Demina11} and references therein. In article \cite{Llibre02} first--order (with respect to $y$) invariant algebraic curves were derived. In this article our aim is to obtain all irreducible algebraic   curves for  dynamical system~\eqref{FH1_DS}.

\begin{theorem}\label{T:FH_DP}

The unique irreducible invariant algebraic curves of dynamical system \eqref{FH1_DS} with $e=0$ and $|\alpha|+|\beta |>0$ are those given in table \ref{T:FH_inv}.

\end{theorem}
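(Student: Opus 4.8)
The plan is to use Theorem \ref{T2} to bound the degree in $y$ of any irreducible invariant algebraic curve, then to exploit the special structure of system \eqref{FH1_DS} (it is affine in $y$) together with the explicit Puiseux series near infinity to pin down all possibilities. First I would write down equation \eqref{ODE_y} for \eqref{FH1_DS} with $e=0$, namely $(-x^3+\sigma x-y+\delta)\,y_x-(\alpha x+\beta y)=0$, and apply the Newton polygon algorithm near $x=\infty$. Because $P$ has degree $3$ in $x$ and degree $1$ in $y$ while $Q$ is linear, the dominant balance near infinity should produce exactly one branch $y \sim -x^3$ (from $P=0$) together with the branches governed by the linear part; I expect to find that the total number of distinct Puiseux series of the form \eqref{Puiseux_inf} satisfying the equation is small — plausibly just one or two. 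By Theorem \ref{T2}, $\deg_y F$ is then at most that number, so every irreducible invariant algebraic curve is of degree $0$ or $1$ in $y$.

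Next I would treat the two cases $\deg_y F = 0$ and $\deg_y F = 1$ directly. If $\deg_y F = 0$ then $F = F(x)$, and substituting into \eqref{Inx_Eq} forces $P\,F_x = \lambda F$, so $F(x)$ divides $P\,F_x$; since $Q = \alpha x + \beta y$ genuinely depends on $y$ unless $\alpha=\beta=0$ (excluded), matching the $y$-dependence shows no nonconstant $F(x)$ works unless special relations among the parameters hold — I would record exactly when. If $\deg_y F = 1$, write $F(x,y) = y - g(x)$ (after normalizing; a leading coefficient depending on $x$ can be ruled out by looking at the cofactor degree). Then \eqref{Inx_Eq} becomes $-P(x,g)\,g' - Q(x,g) = \lambda(x,y)\,(y-g)$, and since the left side is independent of $y$ while the right side must be too, $\lambda$ is at most linear in $y$ with the coefficient of $y$ equal to $-Q_y = -\beta$; comparing degrees in $x$ bounds $\deg g$, and then $g$ is found by undetermined coefficients. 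This reproduces (and shows the completeness of) the first-order curves of \cite{Llibre02} and should yield precisely the entries of table \ref{T:FH_inv}, together with the constraints on $\alpha, \beta, \sigma, \delta$ under which each exists.

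The main obstacle I anticipate is the bookkeeping at the Newton-polygon stage near infinity: one must be careful to enumerate \emph{all} admissible Puiseux series, including those with non-integer exponents arising from the edge of the polygon joining the $P=0$ vertex to the vertex coming from $Q$, and to check that none of the formally started series fails to continue (i.e. that each candidate edge and each root of the characteristic equation actually generates a bona fide series), since Theorem \ref{T2} counts only genuinely convergent/formal solutions of \eqref{ODE_y}. A secondary technical point is justifying the normalization $F = y - g(x)$ in the degree-one case, i.e. ruling out an $x$-dependent coefficient of $y$; this follows by inspecting the highest power of $x$ in \eqref{Inx_Eq} and using irreducibility, but it needs to be stated carefully. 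Once the degree bound $\deg_y F \le$ (number of series) is in hand, the remaining work is the finite, essentially mechanical, elimination that produces table \ref{T:FH_inv}.
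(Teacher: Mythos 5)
Your plan breaks down at the very first step, and the failure is not a matter of bookkeeping: for system \eqref{FH1_DS} the hypothesis of Theorem \ref{T2} is \emph{not satisfied} when $y$ is taken as the dependent variable. The Newton polygon of $(-x^3+\sigma x-y+\delta)y_x-(\alpha x+\beta y)=0$ near $x=\infty$ produces, besides the branch $y\sim -x^3$, a vertex balance $x^3y_x=0$ whose solutions $y(x)=a_0+\alpha x^{-1}+\tfrac{\beta a_0}{2}x^{-2}+\ldots$ form a \emph{one--parameter family} (the Fuchs index $0$ is resonant and $a_0$ stays free), so there are infinitely many admissible Puiseux series in $\mathbb{C}_\infty\{x\}$ and no bound on $\deg_yF$ comes out of Theorem \ref{T2} in that direction. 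Your predicted conclusion $\deg_yF\le 1$ is in fact false: table \ref{T:FH_inv} contains irreducible invariant curves that are quadratic and cubic in $y$, so the case analysis $\deg_yF\in\{0,1\}$ that occupies the rest of your argument would miss most of the classification. (It would, at best, recover the first--order curves already known from \cite{Llibre02}, which is explicitly not the point of the theorem.)

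The idea you are missing is to \emph{exchange the roles of the variables}. For $x=x(y)$ the equation $(-x^3+\sigma x-y+\delta)-(\alpha x+\beta y)x_y=0$ has, near $y=\infty$, a single dominant balance $x^3+y=0$ with exactly three Puiseux solutions $x^{(j)}(y)=b_0^{(j)}y^{1/3}+\ldots$, $b_0^{(j)}$ the cube roots of $-1$, each with uniquely determined coefficients. Theorem \ref{T2} (with $x\leftrightarrow y$) then gives $\deg_xF\le 3$, and Theorem \ref{T3} applied to both factorizations --- over $\mathbb{C}_\infty\{x\}$, where the unique rigid branch $y=-x^3+\ldots$ may occur with multiplicity $0$ or $1$, and over $\mathbb{C}_\infty\{y\}$, where each $x^{(j)}$ occurs with multiplicity $0$ or $1$ --- forces $\deg_xF=0$ or $\deg_xF=3$. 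The case $M=0$ yields $F=y$ with $\alpha=0$; the case $M=3$ is handled by writing $F=c_3(y)x^3+c_2(y)x^2+c_1(y)x+c_0(y)$, eliminating $c_2,c_1,c_0$ in favour of $c_3$, and reducing to two linear ODEs for the polynomial $c_3(y)$, whose polynomial solutions of degree $\le 2$ give the table and whose solutions of degree $\ge 3$ are shown to be inconsistent with irreducibility. Without the variable swap there is no finite bound to start from, so the proposal as written cannot be completed.
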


It seems that the classification of irreducible invariant algebraic curves for the FitzHugh--Nagumo system is given here for the first time.

This article is organized as follows. In section  \ref{P} we prove our main results and also present other theorems for the case of infinite number of admissible Puiseux series.
In section \ref{Newton} we give a definition of the Newton polygon of an algebraic ordinary differential equation and describe a method, which can be used to construct series representing solutions of this equation in neighborhoods of the origin and infinity.  In section  \ref{FH_Darboux} we classify irreducible invariant algebraic curves of the FitzHugh--Nagumo system.

\section{Proof of main results} \label{P}

It is known \cite{Walker} that a Puiseux series of the form \eqref{Puiseux_null} that satisfies equation $F(x,y)=0$ is convergent in a neighborhood of the point $x=0$ (the point $x=0$ is excluded from domain of convergence if $l_0<0$). Analogously, a Puiseux series of the form \eqref{Puiseux_inf} that satisfies equation $F(x,y)=0$ is convergent in a neighborhood of the point $x=\infty$ (the point $x=\infty$ is excluded from domain of convergence if $l_0>0$). This fact follows from the classical result if we consider the change of variables $s=x^{-1}$, which brings infinity to the origin. In other words the   Puiseux series $y(s^{-1})$ satisfying the equation $G(s,y)=0$  converges in a neighborhood of the point $s=0$ (the point $s=0$ is excluded from domain of convergence if $l_0>0$). Here $G(s,y)=s^MF(s^{-1},y)\in\mathbb{C}[s,y]$ is an algebraic curve, $M$ is the degree of $F(x,y)$ with respect to $x$. The set of all Puiseux series of the form \eqref{Puiseux_null} (\eqref{Puiseux_inf}) forms a field, which we denote by $\mathbb{C}\{x\}$ ($\mathbb{C}_{\infty}\{x\}$). For more details see \cite[chapter IV, paragraph 3]{Walker}.

Let us prove the following lemma.

\begin{lemma}\label{L2}

Let $y(x)$ be  a Puiseux series satisfying the equation $F(x,y)=0$, $F_y\not\equiv0$ with $F(x,y)$ being an invariant algebraic curve    of dynamical system \eqref{VF}.
 Then the series $y(x)$ satisfies equation  \eqref{ODE_y}.
\end{lemma}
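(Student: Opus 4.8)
The idea is to differentiate the relation $F(x,y(x))=0$ along the Puiseux series and then use the invariance equation \eqref{Inx_Eq} to eliminate $F_x$ and $F_y$ in favour of $P$, $Q$. First I would substitute the Puiseux series $y=y(x)$ into $F(x,y)=0$; since this is an identity in the field $\mathbb{C}\{x\}$ (or $\mathbb{C}_\infty\{x\}$), I may differentiate it with respect to $x$ term by term, obtaining
\begin{equation}
\label{diff_identity}
F_x(x,y(x))+F_y(x,y(x))\,y_x(x)=0.
\end{equation}
Here the manipulation is legitimate because $F$ is a polynomial, the Puiseux series converges in a punctured neighbourhood of the relevant point, and composition/differentiation are well-defined operations on the field of Puiseux series.

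Next I would evaluate the invariance identity \eqref{Inx_Eq} along the series, i.e. set $y=y(x)$ in $P(x,y)F_x+Q(x,y)F_y=\lambda(x,y)F$. Because $F(x,y(x))\equiv 0$, the right-hand side vanishes, so
\begin{equation}
\label{inv_along}
P(x,y(x))\,F_x(x,y(x))+Q(x,y(x))\,F_y(x,y(x))=0.
\end{equation}
Now I would combine \eqref{diff_identity} and \eqref{inv_along}: from \eqref{diff_identity}, $F_x(x,y(x))=-F_y(x,y(x))\,y_x(x)$, and substituting into \eqref{inv_along} gives $F_y(x,y(x))\bigl(Q(x,y(x))-P(x,y(x))\,y_x(x)\bigr)=0$. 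Since $\mathbb{C}\{x\}$ and $\mathbb{C}_\infty\{x\}$ are fields (hence integral domains), either $F_y(x,y(x))=0$ as a Puiseux series, or $P(x,y(x))\,y_x(x)-Q(x,y(x))=0$, which is exactly equation \eqref{ODE_y}.

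The one genuine obstacle is ruling out the first alternative, $F_y(x,y(x))\equiv 0$. This is where the hypothesis $F_y\not\equiv 0$ (as a polynomial) must be used together with irreducibility or at least squarefreeness of $F$. The standard argument: if $F_y(x,y(x))=0$, then $y(x)$ is a common Puiseux root of $F$ and $F_y$; but a squarefree polynomial $F$ shares no root with $F_y$ over the algebraically closed field of Puiseux series (the resultant $\mathrm{Res}_y(F,F_y)$, a nonzero polynomial in $x$ when $F_y\not\equiv0$ and $F$ is squarefree in $y$, would vanish on the series, forcing it to vanish identically — a contradiction). A subtlety to address is the case where $F$, though not identically constant, does not actually involve $y$, or where $\deg_y F \ge 1$ but $F$ is not squarefree as an element of $\mathbb{C}(x)[y]$; here I would invoke Lemma \ref{L1} to reduce to the irreducible case, for which $F$ and $F_y$ are automatically coprime in $\mathbb{C}(x)[y]$ once $F_y\not\equiv0$. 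Having excluded $F_y(x,y(x))\equiv0$, equation \eqref{ODE_y} follows, completing the proof.
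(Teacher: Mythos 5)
Your proposal is correct and follows essentially the same route as the paper: differentiate $F(x,y(x))=0$, restrict the invariance identity \eqref{Inx_Eq} to the series (where $\lambda F$ vanishes), and eliminate to obtain $Py_x-Q=0$ after excluding the alternative that $F_y$ vanishes along the series, with Lemma \ref{L1} used to pass to an irreducible factor. The only cosmetic difference is that you exclude the degenerate alternative via the resultant $\mathrm{Res}_y(f,f_y)$, whereas the paper argues via B\'ezout's theorem that $f_j$ and $f_{j,y}$ would share the factor $f_j$, contradicting degrees; these are interchangeable versions of the same coprimality fact.
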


\begin{proof}
Representing $F(x,y)$ as the product of irreducible factors $F=f_1^{n_1}\ldots f_m^{n_m}$, we see that there exists $f_j(x,y)$ such that $f_j(x,y(x))=0$.
Differentiating this equation with respect to $x$, we get
\begin{equation}
\begin{gathered}
 \label{L2_1}
f_{j,\,x}(x,y(x))+y_xf_{j,\,y}(x,y(x))=0.
\end{gathered}
\end{equation}
It follows from Lemma \ref{L1} that $f_j(x,y)$ is an irreducible invariant algebraic curve    of dynamical system \eqref{VF} and satisfies the equation
\begin{equation}
\begin{gathered}
 \label{Inx_Eq_f}
P(x,y)f_{j,\,x}+Q(x,y)f_{j,\,y}=\lambda_j(x,y) f_j.
\end{gathered}
\end{equation}
 Substituting $y=y(x)$ into this equation yields
\begin{equation}
\begin{gathered}
 \label{Inx_Eq_f1}
P(x,y(x))f_{j,\,x}(x,y(x))+Q(x,y(x))f_{j,\,y}(x,y(x))=0.
\end{gathered}
\end{equation}
Further, let us note that the series $y=y(x)$ cannot satisfy the equation $f_{j,\,y}(x,y)=0$. Indeed, assuming the contrary we see that $f_j$
and $f_{j,\,y}$ intersect  in an infinite number of
points inside the domain of convergence of the series $y=y(x)$. It follows from the  B\'{e}zout's  theorem that there exists a polynomial both
dividing $f_j$ and $f_{j,\,y}$. Since $f_j$ is irreducible, we conclude that this divisor coincides with $f_j$. Thus we get $f_{j,\,y}=fh$ with $h$ being a polynomial.
This relation contradicts the fact that the degree of $f_{j,\,y}$ is less than the degree of $f_j$.

Homogeneous system of linear equations \eqref{L2_1}, \eqref{Inx_Eq_f1} relating $f_{j,\,x}(x,y(x))$, $f_{j,\,y}(x,y(x))$  has non--trivial solutions. Indeed, $f_{j,\,y}(x,y(x))\neq0$ Consequently
its determinant equals  zero. This completes the proof.

\end{proof}

\textit{Proof of Theorem \ref{T1}.} Using the classical Newton--Puiseux algorithm, it can be shown that the field $\mathbb{C}\{x\}$ is algebraically closed
\cite[chapter IV, paragraph 3]{Walker}. Let $F(x,y)$, $F_y\not\equiv0$  be an invariant algebraic curve of polynomial dynamical system \eqref{VF}. There exists uniquely determined system of elements $y_n(x)\in \mathbb{C}\{x\}$ such that the following representation is valid~\cite[chapter IV, paragraph 3, theorem 3.2]{Walker}
\begin{equation}
\begin{gathered}
 \label{F_rep_0}
F(x,y)=\mu(x)\prod_{n=1}^N\{y-y_n(x)\},
\end{gathered}
\end{equation}
where $N$ is the degree of $F(x,y)$ with respect to $y$ and $\mu(x)\in \mathbb{C}[x]$. Moreover, if a non--constant polynomial $g(x)\in \mathbb{C}[x]$ does not divide $F(x,y)$, then $F(x,y)$ has multiple factors in $\mathbb{C}[x,y]$ if and only if the equation $F(x,y)=0$
has multiple roots in $\mathbb{C}\{x\}$ \cite[chapter IV, paragraph 3, theorem 3.5]{Walker}. Further, it follows from Lemma \ref{L2} that the set of elements $y_n(x)\in \mathbb{C}\{x\}$ appearing in representation \eqref{F_rep_0} is a subset of those satisfying equation \eqref{ODE_y}. If the latter is finite and $F(x,y)$ is irreducible, then $N$ does not exceed the number of distinct Puiseux series of the form \eqref{Puiseux_null}
satisfying equation \eqref{ODE_y}. Indeed, if the degree of $F(x,y)$ with respect to $y$ exceeds the number of admissible
distinct Puiseux series in a neighborhood of the point $x=0$, then $F(x,y)$ has multiple roots in representation \eqref{F_rep_0} and consequently   is reducible. This completes the proof.

\textit{Proof of Theorem \ref{T2}.} We repeat the proof of Theorem \ref{T1} with the field  $\mathbb{C}\{x\}$ replaced by $\mathbb{C}_{\infty}\{x\}$. Note that representation of $F(x,y)$ in $\mathbb{C}_{\infty}\{x\}$ reads as
\begin{equation}
\begin{gathered}
 \label{F_rep_inf}
F(x,y)=\mu(x)\prod_{n=1}^N\{y-y_n(x)\},
\end{gathered}
\end{equation}
where $y_n(x)\in \mathbb{C}_{\infty}\{x\}$ and $\mu(x)\in \mathbb{C}[x]$.


Let us prove other theorems, applicable even if equation \eqref{ODE_y} admits infinite number of Puiseux series.

\begin{theorem}\label{T3_0}

Let $y_j(x)\in \mathbb{C}\{x\}$  be  a Puiseux series with uniquely determined coefficients satisfying  equation \eqref{ODE_y}. Then the degree of $y-y_j(x)$ in representation  \eqref{F_rep_0}  of an irreducible invariant algebraic curve  $F(x,y)$  of polynomial dynamical system \eqref{VF} is either $0$ or~$1$.
\end{theorem}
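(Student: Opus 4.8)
The plan is to show that the factor $y-y_j(x)$ cannot occur in representation \eqref{F_rep_0} with multiplicity two or more, i.e. that $y_j(x)$ is not a multiple root of the equation $F(x,y)=0$ over the field $\mathbb{C}\{x\}$. I would first note that the hypothesis ``uniquely determined coefficients'' is precisely what makes $y_j(x)$ a single well-defined element of $\mathbb{C}\{x\}$, so that speaking of its multiplicity in \eqref{F_rep_0} is meaningful at all; the substance of the argument then comes entirely from the irreducibility of $F(x,y)$.

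I would dispose of the degenerate case $F_y\equiv0$ at once: then $F(x,y)\in\mathbb{C}[x]$ is irreducible in one variable, its degree $N$ with respect to $y$ equals $0$, the product in \eqref{F_rep_0} is empty, and the sought multiplicity is $0$.

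So assume $F_y\not\equiv0$, hence $N=\deg_y F\geq1$. Next I would check that no non-constant polynomial $g(x)\in\mathbb{C}[x]$ divides $F(x,y)$: writing $F=g(x)h(x,y)$, irreducibility of $F$ forces $g$ or $h$ to be a non-zero constant, and $h$ constant would give $F_y\equiv0$, a contradiction. With this observation the criterion already quoted in the text becomes applicable, namely $F(x,y)$ has a multiple factor in $\mathbb{C}[x,y]$ if and only if the equation $F(x,y)=0$ has a multiple root in $\mathbb{C}\{x\}$ (\cite[chapter IV, paragraph 3, theorem 3.5]{Walker}). Since $F$ is irreducible it has no multiple factor, hence the elements $y_1(x),\ldots,y_N(x)$ occurring in \eqref{F_rep_0} are pairwise distinct. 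Consequently $y_j(x)$ either coincides with exactly one of them, in which case $y-y_j(x)$ enters \eqref{F_rep_0} with multiplicity $1$, or with none of them, in which case the multiplicity is $0$. By Lemma \ref{L2} every $y_n(x)$ appearing in \eqref{F_rep_0} does satisfy \eqref{ODE_y}, so the comparison takes place among admissible series; admissibility of $y_j(x)$, however, does not force it to occur in \eqref{F_rep_0}, which is exactly the source of the value $0$.

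I do not expect a genuine obstacle here: the content is simply the observation that, for an irreducible $F$ with $F_y\not\equiv0$, the quoted theorem of Walker asserts precisely that $F$ is square-free as a polynomial in $y$ over $\mathbb{C}\{x\}$. Were that theorem not at our disposal, the only delicate step would be to show that $\operatorname{disc}_y F(x)$ is not identically zero — which follows from $\gcd(F,F_y)=1$ in $\mathbb{C}(x)[y]$, a consequence of irreducibility together with $\deg_y F_y<\deg_y F$ — and then to rule out a putative double Puiseux root by evaluating the identities $F(x,y_j(x))\equiv0$ and $F_y(x,y_j(x))\equiv0$ at points inside the disc of convergence of $y_j(x)$; with theorem 3.5 in hand this detour is unnecessary.
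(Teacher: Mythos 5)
Your proof is correct and follows essentially the same route as the paper: a multiple occurrence of $y-y_j(x)$ in \eqref{F_rep_0} would, by Walker's theorem 3.5, force a multiple factor of $F(x,y)$ in $\mathbb{C}[x,y]$, contradicting irreducibility. You are in fact slightly more careful than the paper's own proof, since you explicitly verify the hypothesis of Walker's criterion (that no non-constant $g(x)\in\mathbb{C}[x]$ divides $F$) and dispose of the degenerate case $F_y\equiv0$.
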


\begin{proof}
Let $F(x,y)$ be  an irreducible invariant algebraic curve    of dynamical system \eqref{VF}. The curve $F(x,y)$ can be represented in the form \eqref{F_rep_0}.  Suppose that  the factor $y-y_j(x)$ appears in this representation  at least twice. Then $F(x,y)$ has multiple factors in $\mathbb{C}\{x\}$  and consequently in $\mathbb{C}[x,y]$ \cite[chapter IV, paragraph 3, theorem 3.5]{Walker}. This fact contradicts irreducibility of $F(x,y)$. The proof is completed.
\end{proof}

\begin{theorem}\label{T3}

Let  $y_j(x)\in \mathbb{C}_{\infty}\{x\}$ be  a Puiseux series with uniquely determined coefficients satisfying  equation \eqref{ODE_y}. Then the degree of $y-y_j(x)$ in representation  \eqref{F_rep_inf} of an irreducible invariant algebraic curve  $F(x,y)$  of dynamical system \eqref{VF} is either $0$ or~$1$.
\end{theorem}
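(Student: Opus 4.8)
The plan is to mimic the proof of Theorem \ref{T3_0} verbatim, replacing the field $\mathbb{C}\{x\}$ of Puiseux series at the origin by the field $\mathbb{C}_{\infty}\{x\}$ of Puiseux series at infinity. First I would take an irreducible invariant algebraic curve $F(x,y)$ of dynamical system \eqref{VF} with $F_y\not\equiv0$ and write it in the form \eqref{F_rep_inf}, which is legitimate since $\mathbb{C}_{\infty}\{x\}$ is algebraically closed (this is the same classical fact from \cite[chapter IV, paragraph 3]{Walker} used in the proofs of Theorems \ref{T1} and \ref{T2}, transported to infinity via the substitution $s=x^{-1}$). By Lemma \ref{L2}, every factor $y-y_n(x)$ occurring in \eqref{F_rep_inf} is a Puiseux series satisfying equation \eqref{ODE_y}.

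Next I would argue by contradiction. Suppose the particular admissible series $y_j(x)$ — which by hypothesis has uniquely determined coefficients — appears in \eqref{F_rep_inf} with multiplicity at least two, i.e.\ the degree of $y-y_j(x)$ in that product is $\geq 2$. Then $F(x,y)$ has a multiple root in $\mathbb{C}_{\infty}\{x\}$, hence by \cite[chapter IV, paragraph 3, theorem 3.5]{Walker} (again read at infinity through $s=x^{-1}$, exactly as in the proof of Theorem \ref{T1}) the polynomial $F(x,y)$ has a multiple factor in $\mathbb{C}[x,y]$. This contradicts the irreducibility of $F(x,y)$. Therefore the multiplicity of $y-y_j(x)$ is $0$ or $1$, which is the assertion.

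The only point that deserves care — and the place I expect a referee to look hardest — is the legitimacy of invoking theorem 3.5 of \cite{Walker} at infinity: theorem 3.5 requires that no non-constant polynomial in $x$ divide $F$, so one must first peel off any such factor, or simply note that passing to $G(s,y)=s^MF(s^{-1},y)$ with $M=\deg_x F$ and clearing the resulting power of $s$ reduces everything to the statement at the origin already used in the excerpt. Since $F$ is irreducible and $F_y\not\equiv 0$, $F$ is not a polynomial in $x$ alone, so at worst a single monomial power of $s$ is introduced, which does not affect multiplicities of the factors $y-y_n(x)$. Apart from this bookkeeping the argument is entirely formal and is just Theorem \ref{T3_0} read through the change of variable bringing infinity to the origin.
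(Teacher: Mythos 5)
Your proposal is correct and follows essentially the same route as the paper: the paper proves Theorem \ref{T3} by repeating the proof of Theorem \ref{T3_0} verbatim with $\mathbb{C}\{x\}$ replaced by $\mathbb{C}_{\infty}\{x\}$, i.e.\ a multiplicity $\geq 2$ of the uniquely determined factor $y-y_j(x)$ would force a multiple factor of $F$ in $\mathbb{C}[x,y]$, contradicting irreducibility. Your extra remark about verifying the hypothesis of Walker's theorem 3.5 at infinity is a point the paper leaves implicit, but it does not change the argument.
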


\begin{proof}
We prove this theorem repeating the proof of Theorem \ref{T3_0} with the field  $\mathbb{C}\{x\}$ replaced by $\mathbb{C}_{\infty}\{x\}$.
\end{proof}

Concluding this section let us mention that one can state the same theorems if the variables $x$ and $y$ change their roles or if the base of Puiseux series is the point $x_0\neq0$ and $x_0\neq\infty$.

\section{Newton polygons and  series representing solutions of ordinary differential equations} \label{Newton}

Let us consider an algebraic ordinary differential equation
\begin{equation}
\begin{gathered}
 \label{Alg_ODE}
E\left(\frac{d^k\,y}{dx^k},\ldots,\frac{d\,y}{dx},y,x\right)=0,
\end{gathered}
\end{equation}
where $E$ is a polynomial of its arguments. This equation can be regarded as the sum of differential monomials of the form
 \begin{equation}
\begin{gathered}
 \label{Monomials}
M[y(x),x]=cx^ly^{j_0}\left\{\frac{d y}{dx}\right\}^{j_1}\ldots \left\{\frac{d^k y}{dx^k}\right\}^{j_k},\quad c\in\mathbb{C}.
\end{gathered}
\end{equation}
In what follows we denote  by $W[y(x),x]$ a polynomial in $x$, $y(x)$, and its derivatives. This polynomial is referred to as a differential polynomial.
Note that we are  in the framework of analytic theory of differential equations, i.e. $x$ is a complex variable and $y(x)$ is a complex--valued function.

\textit{Definition.} We say that  algebraic ordinary differential equation \eqref{Alg_ODE} has \textit{a dominant balance} $E_0[y(x),x]$ related to the point $x=\infty$
if the following conditions are valid:
\begin{enumerate}

\item each differential monomial $M[y(x),x]$ appearing in $E_0[y(x),x]$  is also involved in  original equation \eqref{Alg_ODE};

\item the exists a power function $y(x)=b_0x^r$ (possibly, not unique) with
$b_0\neq0$, $r\in\mathbb{C}$ such that all the monomials $M[y(x),x]$ of $E_0[y(x),x]$ have the same exponent
$s\in\mathbb{C}$ in the relation
$M[b_0x^r,x]=C_Mx^s$;

\item for all the monomials $L[y(x),x]$ of  equation \eqref{Alg_ODE} that are not involved in $E_0[y(x),x]$
we obtain $L[b_0x^r,x]=C_Lx^{p_L}$, where $\text{Re}$ $p_L<$ $\text{Re}$ $s$.

\end{enumerate}

The definition of   dominant balances related to the point $x=0$ can be introduced similarly. The only condition we should change
is the  third one. In this case the last inequality in the third item is $\text{Re}$ $p_L>$ $\text{Re}$ $s$.

A direct  and simple way to find all the dominant balances
is to use the Newton polygon of the algebraic ordinary differential equation under consideration. This technique now known as the power geometry was developed by Bruno \cite{Bruno02, Bruno01}.

 Define the map $q:$ $M \rightarrow \mathbb{R}^2$ by  the following rules
\begin{equation}\label{e:2.2}
cx^{q_1}y^{q_2} \mapsto q=(q_1,q_2), \qquad
\frac{d^ky}{dx^k} \mapsto q=(-k,1),\quad q(M_1M_2)=q(M_1)+q(M_2),
\end{equation}
where $c\in \mathbb{C}$ is a constant, $M_1$ and $M_2$ are differential monomials. We denote the set of all points $p\in \mathbb{R}^2$ corresponding to the monomials of equation  \eqref{Alg_ODE} as $S(E)$.

\textit{Definition.} The convex hull of $S(E)$ is called the Newton polygon of
equation   \eqref{Alg_ODE}.

The boundary of the Newton polygon consists of vertices and edges. Selecting  all the differential monomials of the original equation that generate the vertices and the edges of the Newton polygon, we obtain a number of  balances. The functions solving these balances produce asymptotics (at $x\rightarrow 0$ or $x\rightarrow \infty$) of solutions of equation~\eqref{Alg_ODE} \cite{Bruno02, Bruno01}.

In this article we are interested in power asymptotics. Thus we substitute $y=cx^r$, $c\neq0$ into a balance.  If the balance corresponds to an edge that is not parallel to the $q_1$--axis, then $r\in\mathbb{Q}$, $c\in\mathbb{C}$ are fixed.   The parameter $r$ is unique, while the parameter $c$ may take  several distinct values. Balances related to edges parallel to the $q_1$--axis do not have power solutions. If the balance is not algebraic and corresponds to a vertex, then $c\in\mathbb{C}$ is arbitrary (but not zero), the parameter $r$ can  be complex--valued ($r\in\mathbb{C}$) and there may exist  a set of values for $r$. Algebraic balances related to vertices do not have non--trivial solutions.

In what follows we are working in the frames of two-dimensional Euclidean space $\mathbb{R}^2$. All the vectors and rays which are to appear below have the origin $(q_1,q_2)=(0,0)$ as a starting point.
By $\psi$ we denote the angle between the external normal to an edge (external with respect to the Newton polygon) and the  unit vector directed along the $q_1$--axis $\vec{e}_{q_1}$. In the case of a vertex by $\psi$ we denote the angle between the vectors $\varepsilon (1,\text{Re}\,r)$ and $\vec{e}_{q_1}$, where $\varepsilon=\pm1 $ and $r$ is the exponent in the expression $y=cx^r$. The sign of $\varepsilon$ is chosen in such a way that the vector $\varepsilon (1,\text{Re}\,r)$ lies in the domain bounded by the rays passing through the external normals of the edges attached to the vertex (excluding the rays themselves). If  $\displaystyle 0\leq \psi<\frac{\pi}{2}$, then the balance under consideration is related to the point $x=\infty$  and we obtain $x\rightarrow \infty$ for  the corresponding power asymptotics. If  $\displaystyle \frac{\pi}{2}<\psi\leq\pi$, then the balance under consideration is related to the point $x=0$  and the power asymptotics exists whenever $x\rightarrow 0$.

It is necessary to consider both normals (for the edge) and both  vectors $\pm (1,\text{Re}\,r)$ for the vertex whenever the Newton polygon degenerates to an edge or a vertex.

Non--trivial solutions of the equation $E_0[y(x),x]=0$ with $E_0[y(x),x]$ being a dominant balance
of equation  \eqref{Alg_ODE} produce asymptotics
of solutions of the original equation \cite{Bruno01, Bruno02}.
If the dominant balances related to the point $x=\infty$  are considered, then  the  corresponding asymptotics
exist provided that $x\rightarrow\infty$.
For each power solution $y(x)=b_0x^r$, $b_0\neq0$ of equation $E_0[y(x),x]=0$ one can construct a
 series satisfying  equation  \eqref{Alg_ODE} and possessing this power solution as the highest--order term.

In what follows we shall be interested in Puiseux series near the point $x=\infty$  that satisfy equation \eqref{Alg_ODE}.
In this case we should consider
only the dominant balances near the point $x=\infty$ that possess power solutions $y(x)=b_0x^{r}$ with $r\in\mathbb{Q}$.
A method of finding the Puiseux series of the
form \eqref{Puiseux_inf} that satisfy equation \eqref{Alg_ODE} can be subdivided into several steps.

At  \textit{the first step} one finds the balance $E_0[y(x),x]$ and obtains   power solutions $y(x)=b_0x^{r}$,
where $r\in\mathbb{Q}$ of the equation $E_0[y(x),x]=0$.

At \textit{the second step} one  calculates the formal  G\^{a}teaux derivative of the balance $E_0[y(x),x]$ at the solution $y(x)=b_0x^{r}$:
\begin{equation}
\begin{gathered}
 \label{Puiseux_Gatoux}
\frac{\delta E_0}{\delta y}[b_0x^{r}]=\lim_{s\rightarrow 0 }\frac{E_0[b_0x^{r}+sx^{r-j},x]-E_0[b_0x^{r},x]}{s}=V(j)x^{\tilde{r}}.
\end{gathered}
\end{equation}
In this expression $V(j)$ is a polynomial with respect to $j$ with coefficients in the field $\mathbb{C}$. The zeros of $V(j)$ are
called \textit{the Fuchs indices (or resonances)} of the
balance $E_0[y(x),x]$ and its power solution $y(x)=b_0x^{r}$.
The Fuchs indices that are positive rational numbers are crucial for further analysis. Finally, one takes all such Fuchs indices: $0<j_1<\ldots$ $<j_{m_0}$
 and calculates the number $n_0=\text{lcm}(q_1,\ldots, q_{m_0},r_2)$, where $q_m$ and $r_2$ are defined as $j_m=p_m/q_m$, $r=r_1/r_2$,
  $p_m\in \mathbb{N}$, $q_m\in \mathbb{N}$, $(p_m,q_m)=1$, $1\leq m \leq m_0$, $r_1\in \mathbb{Z}$, $r_2\in \mathbb{N}$, $(r_1,r_2)=1$.
 By lcm we denote the lowest common multiple.

At \textit{the third step} one verifies existence of the Puiseux series of the form \eqref{Puiseux_inf} with $l_0=rn_0$.
Substituting series \eqref{Puiseux_inf}
into equation  \eqref{Alg_ODE} one can find the recurrence relation for its coefficients. This relation takes  the form
\begin{equation}
\begin{gathered}
 \label{Puiseux_rec}
V\left(\frac{k}{n}\right)b_k=U_k(b_0,\ldots,b_{k-1}),\quad k\in\mathbb{N},
\end{gathered}
\end{equation}
where $U_k$ is a polynomial of its arguments. Note that $U_k$, as well as $V(j)$, can also depend on the parameters (if any) of the original equation.
The equations $U_{nj_m}=0$, $1\leq m\leq m_0$ are called \textit{compatibility conditions}. If at least one of the   compatibility conditions is not satisfied,
then the Puiseux series under consideration does not exist. If all the compatibility conditions are satisfied, then the corresponding  Puiseux series exists and possesses
 arbitrary coefficients: $b_{nj_{m_0}}$ and, possibly, all or some  of the coefficients $b_0$,  $b_{nj_{1}}$, $\ldots$, $b_{nj_{m_0-1}}$. Note that the
 Puiseux series under consideration possesses uniquely determined coefficients whenever there are no non--negative rational Fuchs indices.

\textit{Remark 1.} If one wishes to find all the Puiseux series of the form \eqref{Puiseux_inf} that satisfy the original equation, then
it is necessary to implement the procedure described above for all the dominant balances near the point $x=\infty$ and for all their power solutions.

\textit{Remark 2.} The similar method is applicable to an arbitrary--dimensional  polynomial dynamical system. In relation to a dynamical system the Fuchs indices are
called \textit{the  Kovalevskaya exponents}.

\textit{Remark 3.} If at least one compatibility condition is not satisfied, then it is still possible to find the corresponding series, but in this
case there exist coefficients $\{b_k\}$ that are no longer constants, instead, they are the functions of $\log x$.
Note that such power--logarithmic series $y(x)$ do not satisfy polynomial equations
of the form $F(x,y)=0$, $F(x,y)\in\mathbb{C}[x,y]$.

The equation of the form \eqref{ODE_y} related to dynamical system  \eqref{FH1_DS} with $e=0$  is the following
\begin{equation}\label{FH_ODE}
\{-x^3+\sigma x-y+\delta\}y_x-\{\alpha x+\beta y\}=0.
\end{equation}
The Newton polygon of this equation is presented in figure \ref{f:FH_polygon}.

\begin{figure}
 \centerline{\epsfig{file=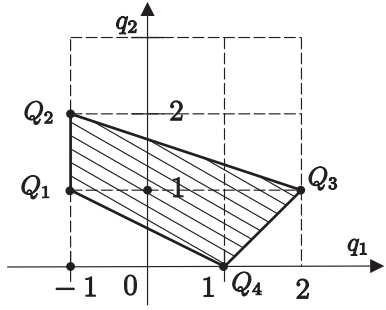,width=50mm}}
 \caption{The Newton polygon of equation \eqref{FH_ODE} with $\alpha\neq0$ and $\delta\neq0$.}\label{f:FH_polygon}
\end{figure}

The balances giving power asymptotics at $x\rightarrow \infty$ and their  power solutions take the form
\begin{equation}\label{FH_Balances}
\begin{gathered}
(Q_2,Q_3):\quad (x^3+y)y_x=0,\quad y(x)=-x^3;\hfill\\
Q_3:\quad \qquad \,\, x^3y_x=0,\quad y(x)=a_0;\hfill\\
(Q_3,Q_4):\quad x^3y_x+\alpha x=0,\quad y(x)=\frac{\alpha}{x}.\hfill
\end{gathered}
\end{equation}
In these expressions $a_0\neq 0$ is an arbitrary constant. The corresponding series turn out to be Laurent series. They are the following
\begin{equation}\label{FH_Series}
\begin{gathered}
(I):\quad y(x)=-x^3+\left(\sigma-\frac{\beta}{3}\right)x+\delta+\ldots;\hfill\\
(II):\quad y(x)=a_0+\frac{\alpha}{x}+\frac{\beta a_0}{2x^2}+\frac{\alpha(\sigma+\beta)}{3x^3}\ldots;\hfill\\
(III):\quad y(x)=\frac{\alpha}{x}+\frac{\alpha(\sigma+\beta)}{3x^3}+\ldots.\hfill
\end{gathered}
\end{equation}

Series $(I)$ and $(III)$ have uniquely determined coefficients, while series $(II)$ has one arbitrary coefficient $a_0$. Further, series $(III)$ is a partial case of series $(II)$.
Indeed, setting  $a_0=0$ in $(II)$, we get $(III)$. If $\alpha=0$ the edge $(Q_3, Q_4)$ disappears  and we do not have series $(III)$.

\begin{figure}
 \centerline{\epsfig{file=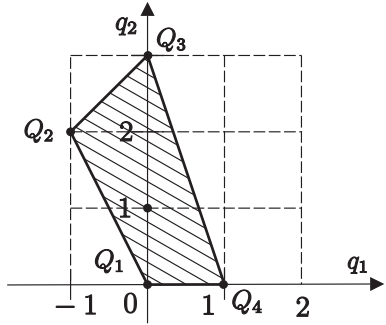,width=50mm}}
 \caption{The Newton polygon of equation \eqref{FH_ODE_x} with $\alpha\neq0$ and $\delta\neq0$.}\label{f:FH_polygon_x}
\end{figure}

Unfortunately, we are in situation with infinite number of admissible Puiseux series near the point $x=\infty$. Further, let the variables $x$ and $y$
change their roles. Now we suppose that $x=x(y)$. The equation of the form \eqref{ODE_y} (with $x\leftrightarrow y$) can be written as
\begin{equation}\label{FH_ODE_x}
\{-x^3+\sigma x-y+\delta\}-\{\alpha x+\beta y\}x_y=0.
\end{equation}
Its Newton polygon is given in figure \ref{f:FH_polygon_x}. Analyzing the Newton polygon of figure \ref{f:FH_polygon_x}, we see that there exists only one balance
 giving power asymptotics at $y\rightarrow \infty$. Indeed, other balances related to the point $y=\infty$ do not have power solutions.
 This balance and its power solutions are the following
\begin{equation}\label{FH_Balances_x}
\begin{gathered}
(Q_3,Q_4):\quad x^3+y=0,\quad x^{(j)}(y)=b_0^{(j)}y^{1/3},\quad b_0^{(j)}=\{(-1)^{1/3}\}_j, \quad j =1,2,3.\hfill
\end{gathered}
\end{equation}
Here $b_0$ is one of the cubic roots of $-1$.  We obtain three distinct  Puiseux series. They take the form
\begin{equation}\label{FH_Series_x}
\begin{gathered}
 x^{(j)}(y)=b_0^{(j)}y^{1/3}+\frac19\left(b_0^{(j)}\right)^2(\beta-3\sigma)y^{-1/3}+\ldots, \quad j =1,2,3.
\end{gathered}
\end{equation}

Now we are in situation with finite number of admissible Puiseux series. We shall not construct admissible Puiseux series near the origin ($x=0$ and $y=0$). Irreducible invariant algebraic curves  of dynamical system \eqref{FH1_DS} will be classified in the next section.

\section{Invariant algebraic curves of  the FitzHugh--Nagumo system} \label{FH_Darboux}

Let $F(x,y)$ be an  invariant algebraic curve  of dynamical system \eqref{FH1_DS} with $e=0$.
Then $F(x,y)$ satisfies the equation
\begin{equation}\label{FH_DP}
\{-x^3+\sigma x-y+\delta\}F_x+\{\alpha x+\beta y\}F_y=\lambda(x,y) F.
\end{equation}

\begin{lemma}\label{L:FH_cof}
If $F(x,y)$ is an invariant algebraic curve of system \eqref{FH1_DS}, then
\begin{equation}\label{FH_DP_LO}
F(x,y)=\mu_0y^N+\sum_{k=0}^{N-1}c_k(x)y^k,\quad N\in\mathbb{N}
\end{equation}
and its cofactor $\lambda$ is $\lambda=A_2x^2+A_1x+A_0$, where $A_2=-M$
with $M$ being the degree of $F(x,y)$ with respect to~$x$.
\end{lemma}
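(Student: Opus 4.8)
The plan is to analyze equation \eqref{FH_DP} by comparing the highest-order terms in $y$ and then separately the highest-order terms in $x$. First I would write $F(x,y)$ as a polynomial in $y$ with coefficients in $\mathbb{C}[x]$, say $F=\sum_{k=0}^{N}c_k(x)y^k$ with $c_N(x)\not\equiv0$, and substitute into \eqref{FH_DP}. On the left-hand side the term $\{\alpha x+\beta y\}F_y$ contributes $\beta N c_N(x)y^N$ as the unique monomial of degree $N$ in $y$, while $\{-x^3+\sigma x-y+\delta\}F_x$ contributes $-y\cdot c_N'(x)y^N = -c_N'(x)y^{N+1}$ together with lower-order pieces, so the coefficient of $y^{N+1}$ on the left is $-c_N'(x)$. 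On the right-hand side, if $\lambda$ has degree $d_y$ in $y$ then the coefficient of $y^{N+d_y}$ is (the leading $y$-coefficient of $\lambda$) times $c_N(x)$. Since the left side has $y$-degree at most $N+1$, we get $d_y\le 1$; and if $d_y=1$ then $-c_N'(x)$ equals a nonzero multiple of $c_N(x)$, impossible by degree count in $x$ unless that multiple is zero. Hence $\lambda$ does not depend on $y$, i.e.\ $\lambda=\lambda(x)$, and matching the $y^{N+1}$ coefficient gives $c_N'(x)=0$, so $c_N=\mu_0$ is a nonzero constant. This is exactly the stated form \eqref{FH_DP_LO}.

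Next I would bound the $x$-degree of $\lambda$. Let $M=\deg_x F$ and consider the monomials of highest $x$-degree on both sides of \eqref{FH_DP}, now viewing $F$ as a polynomial in $x$. The dominant contribution on the left comes from $-x^3F_x$; writing the top $x$-terms of $F$ (collected over all powers of $y$) as a form of degree $M$, $-x^3F_x$ has $x$-degree $M+2$, whereas $\sigma x F_x$, $(\,-y+\delta)F_x$ and $(\alpha x+\beta y)F_y$ all have $x$-degree at most $M+1$. On the right, $\lambda(x)F$ has $x$-degree $\deg\lambda + M$. Equating forces $\deg\lambda\le 2$, and the coefficient of $x^{M+2}$ gives $A_2\cdot(\text{leading }x\text{-part of }F) = -x^3\cdot\partial_x(\text{leading }x\text{-part of }F)$, i.e.\ applying the Euler-type relation to the homogeneous leading part yields $A_2=-M$. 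Writing $\lambda=A_2x^2+A_1x+A_0$ then completes the claim.

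The main obstacle is the bookkeeping in the second part: the "leading $x$-part of $F$" is itself a polynomial in $y$ (a sum $\sum_k \tilde c_k y^k$ where $\tilde c_k$ is the top-degree part of $c_k(x)$, but these need not all have the same $x$-degree), so one has to argue carefully that the $x^{M+2}$-coefficient on the left is precisely $-x^3$ times the $x$-derivative of the degree-$M$ component of $F$ and that no cancellation against the other three terms can occur at level $M+2$ (they simply cannot reach that degree). Once that is pinned down, the identification $A_2=-M$ is just Euler's identity for the (possibly non-homogeneous in $y$, but homogeneous in $x$) leading form. The first part is essentially immediate once one notes $-y\,F_x$ is the only source of $y^{N+1}$ and that $\deg_x c_N' < \deg_x c_N$ unless $c_N$ is constant.
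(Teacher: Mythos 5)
Your proof is correct and follows essentially the same route as the paper: balancing the top $y$-degree terms to force $\lambda$ independent of $y$ and $c_N$ constant, then balancing the top $x$-degree terms (dominated by $-x^3F_x$) to get $\deg_x\lambda\le 2$ and $A_2=-M$. The only cosmetic omission is the remark that $N\ge 1$, which follows since $N=0$ would make $F=c_N=\mu_0$ constant, contradicting the definition of an invariant algebraic curve.
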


\begin{proof}
By direct calculations we find that there are no invariant algebraic curves that do not depend on $y$.

Let $F$ and $\lambda$ have degrees $N\in\mathbb{N}$ and $l\in\mathbb{N}_0$ with respect to $y$ accordingly. Substituting relations $F=\mu(x)y^N$, $\mu(x)\not\equiv0$ and $\lambda=\lambda_0(x)y^l$ with $\mu(x)$, $\lambda_0(x) \in\mathbb{C}[x]$
into equation \eqref{FH_DP} and balancing higher--order terms in $y$ yields $l=1$ and $\mu_x=-\lambda_0(x)\mu$. Since $\mu(x)$ is a polynomial, we get $\lambda_0(x)=0$, $\mu(x)=\mu_0$, where $\mu_0$ is a constant.  In addition we see that the cofactor $\lambda$ does not depend on $y$.

Now suppose that $F$ and $\lambda$ have degrees $M\in\mathbb{N}_0$ and $s\in\mathbb{N}_0$ with respect to $x$ accordingly. Taking expressions $F=\nu(y)x^M$, $\nu(y)\not\equiv0$ and $\lambda=A_sx^s$ with $\nu(y) \in\mathbb{C}[y]$, $A_s\in\mathbb{C}$ and arguing as above, we get $s=2$, $A_2=-M$. This completes the proof.

\end{proof}

In what follows we shall assume without loss of generality that $\mu_0=1$.

\begin{table}
       \begin{tabular}[pos]{l l l}
        \hline
        \textit{Invariant algebraic curves} & \textit{Cofactors} & \textit{Parameters}\\
        \hline
        $ $ & $ $\\
          $ y$ & $\beta$ & $ \alpha=0$\\
        $ y+x^3+(\frac{\beta}{3}-\sigma) x$ & $-3x^2+\sigma+\frac{2\beta}{3}$ & $ \alpha=\frac{2\beta}{9}(\beta-3\sigma) $, $\delta=0$\\
        $ y^2+(x^3+2\beta x)y-\frac{\beta^2}{3} x^2-\frac{16\beta^3}{27}$ & $\displaystyle-3x^2$ & $ \alpha=\frac{\beta^2}{3}$, $ \sigma=-\frac{5\beta }{3}$, $\delta=0$\\
        $ y^2+(x^3-\frac{2\beta}{3}x)y+\frac{\beta^2}{9}x^2$ & $-3x^2+\frac{8\beta}{3}$ & $ \alpha=-\frac{\beta^2}{9}$, $\sigma=\beta$, $\delta=0$\\
        $ y^2+\left(x^3-\frac{10\beta}{27}x-\frac{32\beta^{3/2}}{729}\right)y+\frac{64\beta^{3/2}}{729}x^3$ & $-3x^2+\frac{64\beta}{27}$ & $ \alpha=-\frac{5\beta^2}{81}$, $\sigma=\frac{19\beta}{27}$, $\delta=\frac{32\beta^{3/2}}{243}$\\
       $ \qquad +\frac{5\beta^{2}}{81}x^2 +\frac{224\beta^{5/2}}{19683}x +\frac{112\beta^{3}}{177147}$ & $ $ & $ $\\
       $ y^2+\left(x^3-\frac{10\beta}{27}x+\frac{32\beta^{3/2}}{729}\right)y-\frac{64\beta^{3/2}}{729}x^3$ & $-3x^2+\frac{64\beta}{27}$ & $ \alpha=-\frac{5\beta^2}{81}$, $\sigma=\frac{19\beta}{27}$, $\delta=-\frac{32\beta^{3/2}}{243}$\\
       $ \qquad +\frac{5\beta^{2}}{81}x^2 -\frac{224\beta^{5/2}}{19683}x +\frac{112\beta^{3}}{177147}$ & $ $ & $ $\\
       $ y^3+(x^3+\frac{56\beta}{27}x)y^2-(\frac{32\beta^2}{81}x^2-\frac{400000\beta^3}{531441})y$ & $-3x^2+\frac{25\beta}{27}$ & $ \alpha=\frac{16\beta^2}{81}$, $\sigma=-\frac{47\beta}{27}$, $\delta=0$\\
       $ \qquad -\frac{16384\beta^3}{531441}x^3-\frac{800000\beta^4}{14348907}x$ & $ $ & $ $\\
       $ y^3+(x^3-\frac{7\beta}{6}x)y^2+\frac{2\beta^2}{9}x^2y+\frac{2\beta^3}{27}x^3$ & $-3x^2+\frac{25\beta}{6}$ & $ \alpha=-\frac{\beta^2}{9}$, $\sigma=\frac{3\beta}{2}$, $\delta=0$\\
       $y^3+(x^3-\frac{4\beta}{7}x)y^2+\frac{16\beta^2}{147}x^2y-\frac{64\beta^3}{9261}x^3$ & $-3x^2+\frac{25\beta}{7}$ & $ \alpha=-\frac{8\beta^2}{147}$, $\sigma=\frac{19\beta}{21}$, $\delta=0$\\
        $ $ & $ $\\
                              \hline
    \end{tabular}
    \caption{Irreducible invariant algebraic curves of FitzHugh--Nagumo dynamical system \eqref{FH1_DS} at  $e=0$ and $|\alpha|+|\beta |>0$.} \label{T:FH_inv}
\end{table}

\textit{Proof of  Theorem \ref{T:FH_DP}.} Suppose that  $F(x,y)$ is an irreducible invariant algebraic curve  of FitzHugh--Nagumo dynamical system \eqref{FH1_DS} with $e=0$. In view of  Theorems \ref{T2}, \ref{T3}, Lemma \ref{L:FH_cof},  and results of the previous section we get the following representations in the fields $\mathbb{C}_{\infty}\{x\}$, $\mathbb{C}_{\infty}\{y\}$:
\begin{equation}
\begin{gathered}
 \label{FH_F_rep}
\mathbb{C}_{\infty}\{x\}:\quad F(x,y)=\left\{y+x^3-\left(\sigma-\frac{\beta}{3}\right)x-\delta-\ldots\right\}^{n}\prod_{j=1}^{m}\{y-a_0^{(j)}-\ldots\};
\end{gathered}
\end{equation}
\begin{equation}
\begin{gathered}
 \label{FH_F_rep_x}
\mathbb{C}_{\infty}\{y\}:\quad F(x,y)=\nu(y)\prod_{j=1}^{3}\left\{x-b_0^{(j)}y^{1/3}-\ldots\right\}^{k_j},
\end{gathered}
\end{equation}
where $n=0$ or $n=1$, $m\in\mathbb{N}_0$, $k_j=0$ or $k_j=1$. In addition we suppose that the last product in \eqref{FH_F_rep} is unite whenever $m=0$.  Using Theorem \ref{T2}, we see that the degree of $F(x,y)$ with respect to $x$ is at most $3$. Moreover, it follows from  representations \eqref{FH_F_rep} and \eqref{FH_F_rep_x} that either $M=0$  or $M=3$.

First we consider the case $M=0$. From expressions  \eqref{FH_F_rep}, \eqref{FH_F_rep_x}, and Lemma \ref{L:FH_cof} we obtain that $F(x,y)$ does not depend on $x$ and $n=0$, $A_2=0$. Substituting $F(x,y)=F(y)\in\mathbb{C}[y]$ into equation \eqref{FH_DP} and setting to zero the coefficients at $x^1$ and $x^0$, we find $A_1=0$, $A_0=\beta$, $\alpha=0$, and $F(x,y)=y$.

Secondly we consider the case $M=3$. As a result we obtain $n=1$, $A_2=-3$. Substituting $F(x,y)=c_3(y)x^3+c_2(y)x^2+c_1(y)x+c_0(y)$ into equation \eqref{FH_DP} and setting to zero the coefficients at $x^4$, $x^3$ and $x^2$, we express $c_2(y)$, $c_1(y)$, and $c_0(y)$ via $c_3(y)$ and its derivatives. The result is
\begin{equation}
\begin{gathered}
 \label{FH_F_rep_x_c}
c_2=-\alpha c_{3,y}+A_1c_3,\quad c_1=\frac{\alpha^2}{2}c_{3,yy}+\left(\frac{\beta}{2}y+\alpha A_1\right)c_{3,y}+\frac12\left(A_0+ A_1^2-3\sigma\right)c_{3,y},\\
c_0=-\frac{\alpha^3}{6}c_{3,yyy}+\frac{\alpha}{2}\left(\beta y+\alpha A_1\right)c_{3,yy}-\left\{\frac12\beta A_1y-\frac{\alpha}{6}\left(7\sigma
+\beta-3A_0-3A_1^2\right)\right\}c_{3,y}\\
+\left\{y-\delta+\frac16A_1\left(A_1^2+3A_0-7\sigma\right)\right\}c_3.\hfill
\end{gathered}
\end{equation}
Note that these relations are linear in $c_3(y)$ and its derivatives.  Further, setting to zero the coefficients at $x^1$ and $x^0$ and using relations for $c_2(y)$, $c_1(y)$, and $c_0(y)$, we obtain two fourth--order linear ordinary differential equations for the polynomial $c_3(y)$. We use the method of undetermined coefficients to find their polynomial solutions of degrees $0$, $1$, $2$. The results are given in table~\ref{T:FH_inv}.

Finally, we suppose that there exists a polynomial solution of degree $K\geq3$. We make the substitution $c_3(y)=y^K+u_{K-1}y^{K-1}+\ldots $ and set to zero the coefficients at $y^{K+1-j}$, $j\geq0$. The algebraic equations with $j=0$ give the values of $A_1$ and $A_0$:
\begin{equation}
\begin{gathered}
 \label{FH_F_Cof}
A_1=0,\quad A_0=\left(K+\frac23\right)\beta+\sigma.
\end{gathered}
\end{equation}
Solving other algebraic equations, we find the values of $u_{K-j}$, $j\geq1$ and necessary conditions for such a polynomial solution to exist.
 For example, the equations at $j=1$ take the form
 \begin{equation}
\begin{gathered}
 \label{FH_F_Cof1}
2\beta(3\sigma-\beta)+9\alpha(3K+1)=0,\quad \beta(2\delta-3u_{K-1})=0.
\end{gathered}
\end{equation}
 Taking eight equations ($1\leq j\leq 8$), we see that this system is inconsistent if we require that the resulting algebraic curve is irreducible. Recall that the case $\alpha=0$ and $\beta=0$ is simple (see section \ref{Introduction}) and we exclude it here. This completes the proof.

\section{Conclusion}

In this article we have studied a relationship between   invariant algebraic curves of a polynomial dynamical system and the  Puiseux series satisfying an ordinary differential equation corresponding to the system. A bound on the degrees of irreducible invariant algebraic curves  for a wide class of  polynomial vector fields has been derived. It is shown that the structure of the Puiseux series near infinity can be used to find all irreducible algebraic curves explicitly.  Using this approach we have classified irreducible invariant algebraic curves of  the FitzHugh--Nagumo system. For further developments of the method of the present article see \cite{Demina06, Demina07}.



\bibliography{ref}

\end{document}